\newlength{\dhatheight}
\newcommand{\doublehat}[1]{%
    \settoheight{\dhatheight}{\ensuremath{\hat{#1}}}%
    \addtolength{\dhatheight}{-0.35ex}%
    \hat{\vphantom{\rule{2pt}{\dhatheight}}%
    \smash{\hat{#1}}}}
\newcommand{\identity}{\mathbbm{1}}
\renewcommand{\trace}{\mathrm{Tr}}
\theoremstyle{definition}
\newtheorem{theorem}{Theorem}
\theoremstyle{definition}	
\newtheorem{lemma}[theorem]{Lemma}
\theoremstyle{definition}
\newtheorem{corollary}[theorem]{Corollary}
\theoremstyle{definition}	
\theoremstyle{definition} 
\newtheorem{definition}{Definition}
\theoremstyle{definition}
\newtheorem{remark}{Remark}
\theoremstyle{definition}
\begin{document}

\preprint{APS/123-QED}

\title[Communication Over Entanglement-Breaking Channels With Unreliable Entanglement Assistance]{Communication Over Entanglement-Breaking Channels\\ With Unreliable Entanglement Assistance}

\author{Uzi Pereg}
 \altaffiliation[Also at ]{Helen Diller Quantum Center, Technion}
 \email{uzipereg@technion.ac.il}
\affiliation{%
 ECE Department, Technion.
}%

\date{\today}

\begin{abstract}
Entanglement assistance can improve communication rates significantly. Yet, its generation is susceptible to failure. The  unreliable assistance  model accounts for those challenges. Previous work provided an asymptotic formula that outlines the tradeoff between the unassisted and excess rates from entanglement assistance. We derive a full characterization for entanglement-breaking channels, and show that combining entanglement-assisted and unassisted coding is suboptimal. From a networking perspective, this finding is nontrivial and highlights a quantum behavior arising from superposition.

\end{abstract}

\maketitle


\section{\label{sec:Introduction}Introduction}

Quantum entanglement has the potential to revolutionize communication systems, as it could be used to transmit information at speeds far beyond what is possible classically 
\cite{LXLCLWYWSL:23p,WangRahman:22p,HaoShiLiShapiroZhuangZhang:21p}.
 In optical communications, generating pre-shared entanglement between the transmitter and the receiver can be challenging due to photon absorption during transmission. Therefore, practical systems rely on a back channel to confirm successful entanglement generation \cite{ShchukinSchmidtvanLoock:19p}. However, this introduces delays and further degrades entanglement resources. The author, along with Deppe and Boche \cite{PeregDeppeBoche:23p1}, proposed an alternative approach for communication with unreliable entanglement assistance. Our principle of operation provides reliability by design, by  adapting the communication rate based on the availability of entanglement assistance, while eliminating the need for feedback, repetition, or distillation.

A fundamental task in information theory is to
determine the channel capacity, i.e., the ultimate transmission rate of communication with a vanishing probability of decoding error.
 The Holevo-Schumacher-Westmoreland (HSW) Theorem provides an
 asymptotic description of the capacity of a quantum channel in the form of a multi-letter regularized expression
\cite{Holevo:98p,SchumacherWestmoreland:97p}. 
One may employ the HSW theorem to  compute lower bounds on the capacity and even obtain a complete characterization in specific examples. 
However, in Shannon Theory, 
multi-letter capacity formulas are generally considered an incomplete solution, for reasons of computability 
 \cite{Korner:87b}, uniqueness \cite{Wilde:17b}, and  insights on optimal coding 
 \cite{ElGamalKim:11b}.
%
%
In the entanglement-assisted communication setting, where
pre-shared entanglement resources are available to the transmitter and the receiver, 
 a complete single-letter characterization is well established \cite{BennettShorSmolin:02p} and can be viewed as the quantum parallel of Shannon's capacity theorem \cite{Shannon:48p}.
 Therefore, entanglement-assisted communication has favorable attributes from both performance and analysis perspectives.


Let us now consider communication with \emph{unreliable} entanglement assistance.
Suppose that
Alice wishes to send two messages, at rates $R$ and $R'$.
  She encodes both messages using her share of the entanglement resources, as she does not know whether Bob will have access to the entangled resources. Nevertheless, heralded entanglement generation guarantees that Bob knows whether the procedure was successful or not. Bob has two decoding procedures. If the entanglement assistance has failed to reach Bob's location, he performs a decoding operation to recover the first message alone. Hence, the communication system operates on a rate $R$. Whereas if Bob has entanglement assistance, he decodes both messages, hence the overall transmission rate is $R+R'$.
  In other words, $R$ is a guaranteed rate, and $R'$ 
  is the excess rate of information that entanglement assistance provides.

The previous work \cite{PeregDeppeBoche:23p1} established an asymptotic regularized formula for the capacity region, i.e., the set of all rate pairs $(R,R')$ that can be achieved with a vanishing probability of decoding error. The achievability scheme is  inspired by the classical network technique of superposition coding (SPC).
We refer to the quantum method as \emph{quantum SPC}.
The classical technique consists of layered codebooks, by which the codewords are divided into so-called cloud centers and satellites, representing the first and second layers, respectively.
 In analogy, quantum SPC uses conditional quantum operations that map quantum cloud centers to quantum satellite states. 
Decoding is performed in two stages. First, Bob recovers the cloud index, corresponding to the guaranteed information. If the entanglement assistance is absent, then Bob quits after the first step. Otherwise, if Bob has entanglement assistance, then he continues to decode the satellite, i.e., the excess information. 
Until now, it has remained unclear whether quantum SPC is optimal.

Entanglement breaking is a fundamental property of a large class of quantum channels, mapping any entangled state to a separable state \cite{HorodeckShorRuskai:03p}.
 One example is the qubit depolarizing channel, which is  entanglement breaking only when the depolarization parameter is greater than or equal to $2/3$  \cite{MoravvcikovaZiman:10p}.
 From a Shannon-theoretic perspective,  entanglement-breaking channels are much better understood, compared to general quantum channels \cite{Shor:02p,Holevo:08p,WildeHsieh:12p,WildeWinterYang:14p,WangDasWilde:17p,DingWilde:18p}.
 In particular,
 the unassisted capacity is characterized by the single-letter Holevo information \cite{Shor:02p}.
While an entanglement-breaking channel cannot be used to generate entanglement, it may facilitate the transmission of classical messages, and entanglement assistance can increase the channel  capacity for sending classical information substantially \cite{HaoShiLiShapiroZhuangZhang:21p}.
 Shor \cite{Shor:02p} established the single-letter characterization of the unassisted capacity by first showing that the Holevo information of an entanglement-breaking channel is additive.
The author \cite{Pereg:22p} has recently pointed out
 a more direct approach,  proving a single-letter converse proof ``from scratch".

 Entanglement breaking channels and their properties have been extensively studied in the literature \cite{King:03p1,Sacchi:05p,AhiableKribLevickPereiraRahaman:21p,KribsLevickPereiraRahaman:22a,DevendraSapraSumesh:23a}. 
Matsumoto et al. \cite{MatsumotoShimonoWinter:04p}
portrayed the relation between the additivity property and the entanglement of formation.
Wilde et al. \cite{WildeWinterYang:14p} proved the strong converse property for entanglement-breaking channels.
Entanglement breaking multiple-access channels and broadcast channels are considered in 
\cite{GrudkaHorodecki:10p} and \cite{WangDasWilde:17p}, respectively.
More recently, M\"uller-Hermes and Singh \cite{MullerHermesSingh:22a} showed that if the positive partial transposition (PPT) condition holds for both the channel and its complementary, then the channel  is entanglement breaking, and thus anti-degradable (see also \cite{CubittRuskiSmith:08p,HircheLeditzky:22p}).

In this work, we establish full characterization of the capacity region with unreliable entanglement assistance for the class of entanglement-breaking channels. Our main contribution is thus  a converse result that complements the previous achievability proof, and shows that quantum SPC is indeed optimal for the class of entanglement-breaking channels. The analysis relies on observations from another work by the author \cite[Sec. III-D]{Pereg:22p} along with the geometric properties of the rate region.
To complete the characterization, we single-letterize our capacity formula and  show that the auxiliary systems have bounded dimensions.

We also demonstrate
our results for
an entanglement-breaking depolarizing channel.
We show that quantum SPC can outperform time division even in this simple point-to-point setting. This is surprising because SPC is typically useful in more complex network setups, and does not yield an advantage in point-to-point communication.
For example,   in a classical broadcast channel with degraded messages, where a transmitter communicates with two receivers, SPC is unnecessary when the receivers' outputs are identical,  as  
the capacity region can be attained using a simpler approach of time division. 
That is, concatenating two single-user codes is optimal.
In our context, the system can be regarded as a quantum broadcast channel with degraded messages where one receiver has entanglement assistance, and the other does not.  Nevertheless,  the output states of the receivers are identical (without violating the no-cloning theorem, as we consider two alternative scenarios).
The expectation would be that time division, combining assisted and unassisted codes, achieves optimality. However, this expectation is proven false as quantum SPC can outperform time division, based on the combination of a superposition code with a superposition state.

\subsection*{Illustrative Metaphor}
  Communication with unreliable entanglement assistance is not a mere combination of the entanglement-assisted and unassisted settings.
The protocol poses a challenge as Alice must encode without knowledge of the availability of assistance. The availability of entanglement is not associated with a probabilistic model either.
To illustrate the concept of reliability,  consider the following metaphor. 

Imagine there are $N$ travelers embarking on a journey aboard a ship that may have a variable number of lifeboats. The total capacity of the lifeboats is $L$, which determines how many travelers can be accommodated in case of a shipwreck, $L\leq N$. The ship's speed is denoted as $V\equiv V(N,L)$, while the lifeboats' speed is  $v_0$. If the ship does not sink, each traveler will travel at speed $V$. 
To avoid a morbid narrative, let us envision that in the event of an unforeseen shipwreck, $(N-L)$ travelers will be safely rescued and brought back to the starting point, while the journey continues with the remaining travelers aboard the lifeboats.
 The speed of travel in this scenario is calculated as the average speed of the lifeboats, $R=(L/N)v_0$. 
 
In our metaphor, $R$ represents the guaranteed speed for the remaining travelers, while $R'=V-R$ indicates the excess speed that the ship would have provided. Increasing the number of lifeboats improves the guaranteed speed but reduces the excess speed, while decreasing the number of lifeboats has the opposite effect. When planning for the worst-case scenario, it is crucial to consider both speeds, $R$ and $R'$, rather than just the average speed.

One may consider  the option of dividing the travelers among a heavy ship and a light ship.
Figuratively, our findings show
that if the journey is subject to a quantum evolution, then we
may outperform the division plan by allowing travelers to be
in a quantum superposition state between the two ships.

\section{Coding with Unreliable Assistance}
\label{subsec:Mcoding}

\subsection{Notation,  Information Measures, and Quantum Channels}
\label{Subsec:Notation}
We use standard notation for quantum channels and information measures,  as in
\cite[Chap. 11]{Wilde:17b}.
The letters $X,Y,Z,\ldots$ represent discrete random variables, on finite sets   $\mathcal{X},\mathcal{Y},\mathcal{Z},...$, respectively. 
 The distribution of  $X$ is specified by a probability mass function (pmf) 
	$p_X(x)$ on $\mathcal{X}$. 
 We use $x^n=(x_i)_{i\in [n]}$ to denote  a sequence of letters from $\mathcal{X}$.

The state of a quantum system $A$ is given by a density operator on the Hilbert space $\mathcal{H}_A$.
A measurement  is specified by a collection of operators $\{D_j \}$ that forms a positive operator-valued measure (POVM), i.e.,
 $D_j\geq 0$  and  
$\sum_j D_j=\identity$, where $\identity$ is the identity operator.
Given a bipartite state $\rho_{AB}$, 
define the quantum mutual information by
$
I(A;B)_\rho=H(\rho_A)+H(\rho_B)-H(\rho_{AB}) 
$, 
where $H(\rho) \equiv -\trace[ \rho\log(\rho) ]$ is the von Neumann entropy.
The conditional quantum entropy and mutual information are defined by
$H(A|B)_{\rho}=H(\rho_{AB})-H(\rho_B)$ and
$I(A;B|C)_{\rho}=H(A|C)_\rho+H(B|C)_\rho-H(A,B|C)_\rho$, respectively.

A quantum channel $\mathcal{N}_{A\to B}$ is a completely-positive trace-preserving (cptp) map. 
If the systems $A^n=(A_1,\ldots,A_n)$ are sent through $n$ channel uses, then the input state $\rho_{A^n}$ undergoes the tensor product mapping $\mathcal{N}_{A\to B}^{\otimes n}$.
The channel  is called entanglement breaking if for every  input state $\rho_{A A'}$, where $A'$ is an arbitrary reference system, the channel output is separable,  i.e.,
$
( \mathcal{N}_{A\rightarrow B} \otimes \identity)(\rho_{A A'})= \sum_{x\in\mathcal{X}} p_X(x)  \psi_{B}^x \otimes \psi^x_{A'} 
$, 
for some pmf $p_X$ and pure states $\psi_{B}^x$ and $\psi_{A'}^x$.  
The Kraus representation of an entanglement breaking channel consists of unit-rank Kraus operators. 
Furthermore, every entanglement-breaking channel  can be represented as a serial concatenation of a measurement channel followed by a classical-quantum channel \cite[Corollary 4.6.1]{Wilde:17b}. 

\subsection{Coding and Channel Capacity}
We define a  code for communication with unreliable entanglement resources.
 Alice and Bob's entangled systems are denoted 
$T_A$ and $T_B$, respectively.

\begin{definition} 
A $(2^{nR},2^{nR'},n)$   code with unreliable entanglement assistance consists of the following:   
Two message sets $[2^{nR}]$ and $[2^{nR'}]$, where $2^{nR}$, $2^{nR'}$ are  integers, an entangled state $\Psi_{T_A,T_B}$, 
  a collection of encoding maps $\mathcal{F}^{m,m'}_{T_{A}\rightarrow A^n}
  $ for $m\in [2^{nR}]$ and $m'\in [2^{nR'}]$,  and two decoding POVMs,
	$\mathcal{D}_{B^n T_B}
	=\{ D_{m,m'} \}$ and $\mathcal{D}^*%
	_{B^n}=\{ D^*_{m} \}$.
%
\end{definition} 
Alice chooses two  messages, $m\in [2^{nR}]$ and $m'\in [2^{nR'}]$. %
She applies the encoding map 
to her share of the entangled state, and then transmits %
$A^n$ over $n$ channel uses of $\mathcal{N}_{A\rightarrow B}$. 
 Bob receives 
$B^n$. 
If the entanglement assistance is present,  i.e., Bob has access to the  resource $T_B$, then he should recover both messages. He  performs a joint measurement
 $\mathcal{D}_{B^n T_B}$ to obtain an estimate $(\hat{m},\hat{m}')$.

Otherwise, if entanglement assistance is absent,  Bob does not have $T_B$. 
Hence,
 he performs the measurement $\mathcal{D}^*_{B^n}$ to obtain an estimate $\doublehat{m}$ of the first message alone.
 The error probability   is  
\begin{align}
&P_{e|  m,m'}^{(n)}
= 1- %
\trace\big[ \mathcal{D}\circ
\mathcal{N}^{\otimes n}_{A\rightarrow B}\circ
\mathcal{F}^{m,m'} (\Psi_{T_A,T_B})
 \big] %
\intertext{in the presence of entanglement assistance, and }
&P_{e|  m,m'}^{*(n)}= 
1-
\trace\Big[ \mathcal{D}^*\circ
\mathcal{N}^{\otimes n}_{A\rightarrow B}\circ
\mathcal{F}^{m,m'}(\Psi_{T_A})  \Big] 
\end{align}
without assistance.
The encoded input remains the same in both scenarios,  since Alice does not know whether entanglement is available or not. Therefore, the  error depends on $(m,m')$ in both cases.
A rate pair $(R,R')$ is  achievable  if there exists a sequence of $(2^{nR},2^{nR'},n)$ 
codes with unreliable entanglement assistance, such that 
$\max(P_{e|  m,m'}^{(n)},P_{e|  m,m'}^{*(n)} )\to 0$ as $n\to\infty$. 
The  capacity region $\mathcal{C}_{\text{EA*}}(\mathcal{N})$ with unreliable entanglement assistance is defined as the set of  achievable rate pairs.

\section{Results}
Let $\mathcal{N}_{ A\rightarrow B}$ be an entanglement-breaking channel (see Section~\ref{Subsec:Notation}). 
Define the region
\begin{align}
\label{eq:calRClea}
\mathcal{R}_\text{EA*}(\mathcal{N})
=
\bigcup
\left\{ \begin{array}{rl}
  (R,R') \,:\;
	R \leq& I(X;B)_\omega  \\
  R'   \leq& I(G_2;B|  X)_\omega
	\end{array}
\right\}
\end{align}
where the union is over all auxiliary variables $X\sim p_X$, all quantum states $\varphi_{G_1 G_2}$, and all encoding channels $\mathcal{F}^{(x)}_{G_1\to A}$,
\begin{align}
\omega_{XA G_2}
&=
\sum_{x\in\mathcal{X}} p_X(x) \ketbra{x}\otimes (\mathcal{F}^{(x)}_{G_1\to A} \otimes \text{id})
(\varphi_{ G_1 G_2}) \,,  
\nonumber\\
\omega_{XB G_2  }
&=
(\text{id}\otimes\mathcal{N}_{A\rightarrow B}\otimes \text{id})(\omega_{XA  G_2})
\,.
\end{align}
Intuitively,  $X$  represents  the guaranteed information, and $G_1$, $G_2$ are Alice and Bob's resources. Since the entangled resources $G_1$ and $G_2$ are pre-shared, 
the state is 
uncorrelated with the messages.
 Alice encodes the excess information   using the encoding channel $ \mathcal{F}^{(x)} $.

\subsection{Capacity Theorem}

Our main results are stated below, characterizing the capacity region for communication over entanglement-breaking channels with unreliable entanglement assistance. Previous work \cite{PeregDeppeBoche:23p1} established a regularized  characterization for the capacity region, i.e., an asymptotic \emph{multi-letter formula} of the form 
$\bigcup_{K= 1}^{\infty} \frac{1}{K} \mathcal{R}(\mathcal{N}^{\otimes K})$.
Here, 
we provide a complete characterization in the form of a single-letter formula.


\begin{theorem}
\label{theo:ClEA}
The capacity region of an entanglement-breaking quantum channel $\mathcal{N}_{A\to B}$ with unreliable entanglement assistance is given by
\begin{align}
\mathcal{C}_\text{EA*}(\mathcal{N})= \mathcal{R}_\text{EA*}(\mathcal{N})
\end{align}
where $\mathcal{R}_\text{EA*}(\mathcal{N})$ is as defined in \eqref{eq:calRClea}.
\end{theorem}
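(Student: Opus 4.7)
The plan is to prove both directions of the equality. The achievability is essentially inherited from the previous work \cite{PeregDeppeBoche:23p1}: the regularized multi-letter region contains $\mathcal{R}_\text{EA*}(\mathcal{N})$ as its $K=1$ term, so it suffices to focus on the converse together with a single-letterization argument tailored to entanglement-breaking channels.

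For the converse, I would start from Fano-type inequalities applied to both decoding POVMs. The unassisted decoder $\mathcal{D}^*_{B^n}$ yields, via the Holevo bound, an upper bound on $R$ in terms of a classical--quantum mutual information between a uniform message $M$ (carrying the guaranteed bits) and the channel output sequence $B^n$. The assisted decoder $\mathcal{D}_{B^n T_B}$ yields a bound on $R'$ in terms of a conditional quantum mutual information $I(M';B^n T_B \mid M)$ between the excess message and the joint output. The aim is to rewrite these into single-letter expressions matching $I(X;B)_\omega$ and $I(G_2;B\mid X)_\omega$, where $X$ is the cloud-center auxiliary and $G_2$ is Bob's share of the pre-shared entanglement.

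The main obstacle is the single-letterization step, which is where the entanglement-breaking hypothesis is essential. My plan is to use the decomposition of an entanglement-breaking channel as a measurement channel followed by a classical--quantum map, so that effectively the channel output on the $i$-th use carries an intrinsic classical register $X_i$. Adapting the ``from scratch'' converse technique of \cite[Sec.~III-D]{Pereg:22p} to the bipartite rate region, I would (i) introduce a time-sharing variable $Q$ uniform on $[n]$, (ii) set $X = (X_Q, Q)$ so the Holevo-like bound single-letterizes into $I(X;B)_\omega$ by the additivity available for the measurement-induced ensemble, and (iii) identify $G_2$ with the pair $(T_B, B^{i-1})$ so the chain rule plus the entanglement-breaking property collapses $I(M';B^n T_B \mid M)$ into $I(G_2;B\mid X)_\omega$ for a valid choice of $\varphi_{G_1 G_2}$ and $\mathcal{F}^{(x)}_{G_1\to A}$. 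The nontrivial verification is that the single-letter state $\omega_{X A G_2}$ produced by this identification indeed has the product-ensemble form required by \eqref{eq:calRClea}, and this is exactly where the unit-rank Kraus structure of $\mathcal{N}$ is used.

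Finally, to complete the characterization I would establish cardinality bounds on the auxiliary alphabet $\mathcal{X}$ via the Fenchel--Eggleston--Carath\'eodory method, preserving the marginal on $A$ together with the two mutual-information functionals $I(X;B)_\omega$ and $I(G_2;B\mid X)_\omega$, which yields a bound of order $\dim(\mathcal{H}_A)^2$. A purification and Schmidt-decomposition argument bounds the dimensions of $G_1,G_2$ so that the supremum in \eqref{eq:calRClea} is attained, giving a genuine single-letter capacity formula and closing the proof.
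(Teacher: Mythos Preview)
Your plan has a genuine gap in step (iii), and it stems from trying to single-letterize a direct bound on $R'$ rather than on the sum $R+R'$. The chain rule applied to $I(M';B^n T_B\mid M)$ produces terms of the form $I(M';B_i\mid M\,T_B\,B^{i-1})$, where the excess message $M'$ sits on the \emph{left} of the mutual information. Your proposed identification $G_2=(T_B,B^{i-1})$ places the wrong objects in the reference system: $G_2$ must carry the information that distinguishes the satellite codewords, i.e.\ it has to absorb $M'$ (together with $T_B$), not the past outputs. With your choice, there is no inequality linking $I(M';B_i\mid M\,T_B\,B^{i-1})$ to $I(T_B B^{i-1};B_i\mid X_i)$, and if you instead try $G_2=(M',T_B)$ and push $B^{i-1}$ into the conditioning via the measurement outcomes $Y^{i-1}$, you pick up an extra cross-term $I(Y^{i-1};B_i\mid M)$ that does not vanish. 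In addition, $B^{i-1}$ depends on the guaranteed message $M$, so your $G_2$ violates the structural requirement in \eqref{eq:calRClea} that the marginal of $G_2$ be independent of $X$.

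The paper avoids this obstruction by bounding the \emph{sum rate}: from Fano one gets $n(R+R')\le I(K K' T_B;B^n)$, and the chain rule together with the measurement--preparation decomposition of the entanglement-breaking channel yields $\sum_i I(K' T_B\,X_i;B_i)$ with $X_i=(K,Y^{i-1})$ and $G_2=(K',T_B)$. This lands the rate pair in the trapezoidal region $\mathcal{O}_{\text{EA}*}(\mathcal{N})$ (bounds on $R$ and on $R+R'$), not directly in $\mathcal{R}_{\text{EA}*}(\mathcal{N})$. A separate equivalence lemma, proved via convexity of $\mathcal{R}_{\text{EA}*}(\mathcal{N})$ and a time-sharing argument between the corner points $(I(X;B),I(G_2;B\mid X))$ and $(0,I(XG_2;B))$, then shows $\mathcal{O}_{\text{EA}*}(\mathcal{N})=\mathcal{R}_{\text{EA}*}(\mathcal{N})$. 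Your proposal omits this geometric step entirely; without it, even a correct sum-rate converse would not close to the rectangular description in \eqref{eq:calRClea}.
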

The proof is given in Section~\ref{app:capacity}.

\begin{remark}
\label{Remark:Single_Letter}
Single-letterization is highly valued in Shannon theory for reasons of computability 
 \cite{Korner:87b}, uniqueness \cite{Wilde:17b}, and  insights on optimal coding 
 \cite{ElGamalKim:11b}. See further discussion in Section~\ref{Section:Single_Letterization}.
However, the result in Theorem~\ref{theo:ClEA} in itself is not enough to claim that this is truly a single-letter characterization, as 
  the computation of a rate region requires specified dimensions. 
 Thereby, we show in Section~\ref{Subsection:Single_Letter} that the 
 auxiliary systems, $X$, $G_1$, and $G_2$, all 
 have bounded dimensions.
Together, the results in Theorem~\ref{theo:ClEA} and Section~\ref{Subsection:Single_Letter} complete the characterization.  
\end{remark}

\subsection{Equivalent characterization}
\label{Subsection:Equivalence}
  Before we prove the capacity theorem, we establish useful properties of the region 
  $\mathcal{R}_\text{EA*}(\mathcal{N})$,  as defined in \eqref{eq:calRClea}.
We  
show an equivalence to the  region below:
\begin{align}
\mathcal{O}_\text{EA*}(\mathcal{N})
&=
\bigcup
\left\{ \begin{array}{rl}
  (R,R') \,:\;
	R \leq& I(X;B)_\omega  \\
  R+R'   \leq& I(X G_2;B)_\omega
	\end{array}
\right\}
\label{eq:calRClea2}
\end{align}
where the union is  as in \eqref{eq:calRClea}.
This  will be useful in the  proof for our main theorem in Section~\ref{app:capacity}, where we will show that every achievable rate pair must lie within $\mathcal{O}_\text{EA*}(\mathcal{N})$.

We give the intuition below.
In order to show the equivalence between the regions $\mathcal{R}_\text{EA*}(\mathcal{N})$ and $\mathcal{O}_\text{EA*}(\mathcal{N})$, we use the  geometric properties of our regions, as illustrated in 
 Figure~\ref{fig:Convexity}.
 The region $\mathcal{R}_\text{EA*}(\mathcal{N})$ is defined in \eqref{eq:calRClea} as a union of rectangles.
In particular, the light shaded rectangle in Figure~\ref{fig:Convexity} corresponds to
  the bounds $0\leq R\leq I(X;B)_\omega$ and 
$0\leq R'\leq I(G_2;B|X)_\omega$,  for a fixed auxiliary variable, state, and encoding channels.
The corner point of the region is denoted by 
$P_0=\left(I(X;B)_\omega ,I(G_2;B|X)_\omega\right)$.
Similarly, the region $\mathcal{O}_\text{EA*}(\mathcal{N})$ is a union of trapezoids, the corners of which are
$P_0$ and $P_1=\left(0 ,I(X G_2;B)_\omega\right)$.
Hence, the dark shaded area in Figure~\ref{fig:Convexity} is the gap between the rectangle and the trapezoid, which the regions  $\mathcal{R}_\text{EA*}(\mathcal{N})$ and $\mathcal{O}_\text{EA*}(\mathcal{N})$ comprise of. 
Now, observe that the point $P_1$ belongs to \emph{another} rectangle in $\mathcal{R}_\text{EA*}(\mathcal{N})$, taking $\bar{G}_2$ and $\bar{X}$ to be $(X,G_2)$ and null, respectively. 
Therefore,
the convexity of the region $\mathcal{R}_\text{EA*}(\mathcal{N})$ implies that any convex combination $P_\lambda=(1-\lambda)P_0+\lambda P_1$ must also lie within $\mathcal{R}_\text{EA*}(\mathcal{N})$, thereby $\mathcal{R}_\text{EA*}(\mathcal{N})=\mathcal{O}_\text{EA*}(\mathcal{N})$. The details are given below.

\begin{figure}[tb]
\center
\includegraphics[scale=1,trim={2cm 18cm 1cm 4cm},clip]
{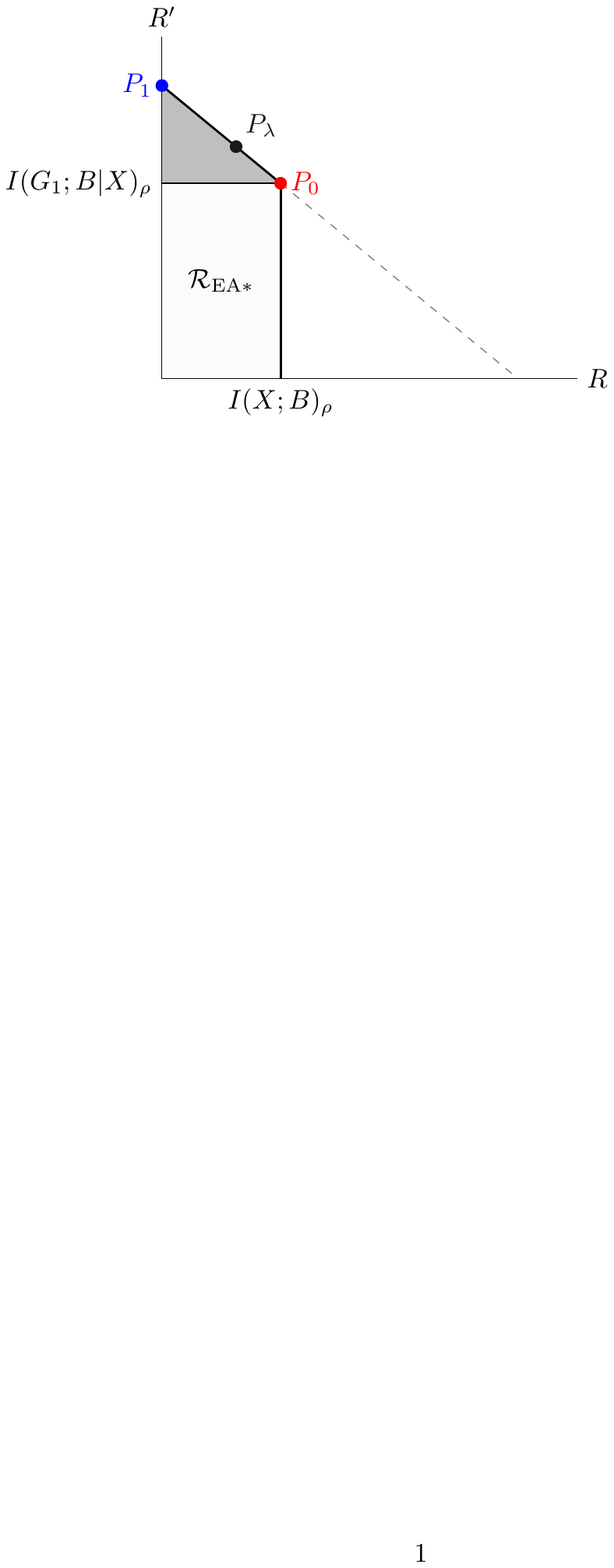} 
\caption{Achievable rate regions. 
}
\label{fig:Convexity}
\end{figure}

We begin with the convexity of 
 our original region. 
\begin{lemma}
\label{Lemma:Convexity}
The rate region $\mathcal{R}_\text{EA*}(\mathcal{N})$ is a convex set.
\end{lemma}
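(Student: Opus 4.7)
The plan is to establish convexity by a standard time-sharing construction adapted to the quantum setting. Given two rate pairs $(R_1, R_1'), (R_2, R_2') \in \mathcal{R}_\text{EA*}(\mathcal{N})$ witnessed by parameters $(X_j \sim p_{X_j},\, \varphi^{(j)}_{G_1^{(j)} G_2^{(j)}},\, \mathcal{F}^{(j)(\cdot)}_{G_1^{(j)}\to A})$ for $j\in\{1,2\}$, and any $\lambda \in [0,1]$, I want to exhibit a single choice of parameters that witnesses $\lambda(R_1, R_1') + (1-\lambda)(R_2, R_2')$ lying in $\mathcal{R}_\text{EA*}(\mathcal{N})$.

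First, I would introduce a classical time-sharing random variable $Q$ with $\Pr(Q=1) = \lambda$ and $\Pr(Q=2)=1-\lambda$, and form a merged classical auxiliary $\tilde X = (Q, X_Q)$. I would place the new pre-shared state on the tensor product $\tilde\varphi_{\tilde G_1 \tilde G_2} = \varphi^{(1)}_{G_1^{(1)} G_2^{(1)}} \otimes \varphi^{(2)}_{G_1^{(2)} G_2^{(2)}}$ with $\tilde G_i = G_i^{(1)} G_i^{(2)}$, and define the new encoding map $\tilde{\mathcal{F}}^{(q,x)}_{\tilde G_1 \to A}$ to apply $\mathcal{F}^{(q)(x)}$ on the $G_1^{(q)}$ factor while tracing out the unused branch $G_1^{(3-q)}$.

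Next, I would verify the two information inequalities on the resulting state $\tilde\omega_{\tilde X B \tilde G_2}$. Conditioned on $Q=q$, the encoder never touches $G_1^{(3-q)}$ and the channel acts only on the output of the $q$-th branch, so the joint state factorizes as $\omega^{(q)}_{X_q B G_2^{(q)}} \otimes \varphi^{(3-q)}_{G_2^{(3-q)}}$. This gives $I(\tilde G_2; B \mid \tilde X, Q=q) = I(G_2^{(q)}; B \mid X_q)_{\omega^{(q)}}$, and averaging over $Q$ yields
\begin{align*}
I(\tilde G_2; B \mid \tilde X)_{\tilde\omega} = \lambda\, I(G_2^{(1)}; B \mid X_1)_{\omega^{(1)}} + (1-\lambda)\, I(G_2^{(2)}; B \mid X_2)_{\omega^{(2)}} \geq \lambda R_1' + (1-\lambda) R_2'.
\end{align*}
For the cloud bound, the chain rule together with nonnegativity of mutual information gives $I(\tilde X; B)_{\tilde\omega} = I(Q;B) + I(X_Q; B \mid Q) \geq \lambda\, I(X_1;B)_{\omega^{(1)}} + (1-\lambda)\, I(X_2;B)_{\omega^{(2)}} \geq \lambda R_1 + (1-\lambda) R_2$.

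The only delicate step is the bookkeeping verification that Bob's unused marginal $\varphi^{(3-q)}_{G_2^{(3-q)}}$ truly decouples from $B$ conditioned on $(Q, X_Q)$; this underlies the factorization used above and follows because local maps applied only to the $q$-th branch preserve the product structure across the two branches. With this in hand, $(\tilde X, \tilde\varphi, \tilde{\mathcal{F}})$ is an admissible choice of parameters in the union defining $\mathcal{R}_\text{EA*}(\mathcal{N})$ whose associated rectangle contains the desired convex combination, completing the argument.
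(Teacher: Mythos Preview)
Your proposal is correct and follows essentially the same time-sharing construction as the paper's proof: both form the product pre-shared state $\varphi^{(1)}\otimes\varphi^{(2)}$, introduce a Bernoulli time-sharing variable merged into the classical auxiliary, define the encoder to act on the selected branch while tracing out the other, and exploit the resulting decoupling of the unused $G_2$-factor together with the chain rule to recover the convex combination of the two rate bounds. The only differences are notational.
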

As explained above,  the convexity of $\mathcal{R}_\text{EA*}(\mathcal{N})$ implies that the point $P_\lambda$ in Figure~\ref{fig:Convexity} is included within the union of rectangles, i.e., $P_\lambda\in \mathcal{R}_\text{EA*}(\mathcal{N})$. We obtain the following consequence.
\begin{corollary}
\label{Corollary:Region_lambda}
For every $\lambda\in [0,1]$,
\begin{align}
\mathcal{R}_\text{EA*}(\mathcal{N})
\supseteq
%
\left\{ \begin{array}{rl}
  (R,R') \,:\;
	R \leq& (1-\lambda)I(X;B)_\omega  \\
    R'\leq& I(G_2;B|  X)_\omega+\lambda I(X;B)_\omega
	\end{array}
\right\} \,.
\label{eq:calRClea3}
\end{align}
\end{corollary}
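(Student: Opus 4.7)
The plan is to combine Lemma~\ref{Lemma:Convexity} with two explicit points that both lie in $\mathcal{R}_\text{EA*}(\mathcal{N})$, and then to exploit the fact that the region is a componentwise down-set. Fix the auxiliary variable $X$, state $\varphi_{G_1G_2}$, and encodings $\{\mathcal{F}^{(x)}\}$ implicit in the statement, and let $\omega$ be the induced state. The first point, $P_0:=(I(X;B)_\omega,I(G_2;B|X)_\omega)$, is by definition the corner of a rectangle in the union \eqref{eq:calRClea}. For the second point, $P_1:=(0,I(XG_2;B)_\omega)$, I will exhibit an alternative choice of auxiliaries that witnesses $P_1\in\mathcal{R}_\text{EA*}(\mathcal{N})$.

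Specifically, take a new, trivial classical auxiliary $\bar{X}$, set $\bar{G}_1:=A$ and $\bar{G}_2:=(X,G_2)$, take the new pre-shared state to be $\bar{\varphi}_{\bar{G}_1\bar{G}_2}:=\omega_{AXG_2}$ (a valid density operator with the classical register $X$ embedded diagonally on Bob's side), and let $\bar{\mathcal{F}}_{\bar{G}_1\to A}$ be the identity channel. The induced state after passing through $\mathcal{N}_{A\to B}$ is then $\omega_{BXG_2}$, so that $I(\bar{G}_2;B|\bar{X})_{\bar{\omega}}=I(XG_2;B)_\omega$. This exhibits $P_1$ as the corner of another rectangle in the union \eqref{eq:calRClea}.

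Having $P_0,P_1\in\mathcal{R}_\text{EA*}(\mathcal{N})$, Lemma~\ref{Lemma:Convexity} yields that every convex combination $P_\lambda=(1-\lambda)P_0+\lambda P_1$ also lies in $\mathcal{R}_\text{EA*}(\mathcal{N})$. Expanding via the chain rule $I(XG_2;B)_\omega=I(X;B)_\omega+I(G_2;B|X)_\omega$ gives $P_\lambda=((1-\lambda)I(X;B)_\omega,\,I(G_2;B|X)_\omega+\lambda I(X;B)_\omega)$, which is precisely the upper-right corner of the rectangle on the right-hand side of \eqref{eq:calRClea3}. Since $\mathcal{R}_\text{EA*}(\mathcal{N})$ is a union of rectangles anchored at the origin, it is a down-set: every non-negative pair componentwise dominated by one of its members also belongs to it. Applying this closure to $P_\lambda$ delivers the claimed inclusion. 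The only mildly delicate step is the construction for $P_1$, where one must check that absorbing a classical register into the pre-shared resource is consistent with the definition \eqref{eq:calRClea}; the remainder is convexity together with the down-set property.
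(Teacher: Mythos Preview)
Your proof is correct and follows essentially the same route as the paper: identify $P_0=(I(X;B)_\omega,I(G_2;B|X)_\omega)$ and $P_1=(0,I(XG_2;B)_\omega)$ as belonging to $\mathcal{R}_\text{EA*}(\mathcal{N})$ (the latter via a trivial $\bar X$, pre-shared state $\omega_{A,XG_2}$, and identity encoding), take the convex combination using Lemma~\ref{Lemma:Convexity}, and expand with the chain rule. Your explicit invocation of the down-set property is a welcome clarification the paper leaves implicit.
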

The proof for the convexity properties in Lemma~\ref{Lemma:Convexity} and Corollary~\ref{Corollary:Region_lambda}  is given in Appendix~\ref{app:Convexity}. 
Next, we use those properties  to establish equivalence. 


 \begin{lemma}[Equivalence]
 \label{Lemma:Equivalence}
$
\mathcal{R}_\text{EA*}(\mathcal{N})=\mathcal{O}_\text{EA*}(\mathcal{N}) 
$. 
 \end{lemma}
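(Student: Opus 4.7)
The plan is a double-inclusion argument that follows the geometric picture in Figure~\ref{fig:Convexity}: each trapezoid in $\mathcal{O}_\text{EA*}$ is obtained from the corresponding rectangle in $\mathcal{R}_\text{EA*}$ by filling in the gap $P_0 P_1$, where $P_1 = (0, I(XG_2;B)_\omega)$, and this extra slice is available in $\mathcal{R}_\text{EA*}$ by convexity once one realizes $P_1$ is itself a rectangle corner.

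The first inclusion $\mathcal{R}_\text{EA*}(\mathcal{N}) \subseteq \mathcal{O}_\text{EA*}(\mathcal{N})$ is essentially bookkeeping with the chain rule. Fix the parameters $(X,\varphi_{G_1G_2},\{\mathcal{F}^{(x)}\})$ defining a rectangle of $\mathcal{R}_\text{EA*}$ and take $(R,R')$ in it. From $R \leq I(X;B)_\omega$ and $R' \leq I(G_2;B|X)_\omega$ I get
\begin{equation*}
R+R' \leq I(X;B)_\omega + I(G_2;B|X)_\omega = I(XG_2;B)_\omega,
\end{equation*}
so $(R,R')$ satisfies the defining inequalities of the trapezoid in $\mathcal{O}_\text{EA*}(\mathcal{N})$ associated with the same parameters.

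For the reverse inclusion $\mathcal{O}_\text{EA*}(\mathcal{N}) \subseteq \mathcal{R}_\text{EA*}(\mathcal{N})$, fix a trapezoid and a point $(R,R')$ inside it. If $I(X;B)_\omega = 0$, the trapezoid collapses to the segment $\{0\}\times[0, I(G_2;B|X)_\omega]$, which is already a (degenerate) rectangle of $\mathcal{R}_\text{EA*}$, so assume $I(X;B)_\omega > 0$. I then set
\begin{equation*}
\lambda = 1 - \frac{R}{I(X;B)_\omega} \in [0,1],
\end{equation*}
which is valid because the trapezoid enforces $R \leq I(X;B)_\omega$. With this choice, $(1-\lambda)\,I(X;B)_\omega = R$, and the sum-rate constraint of the trapezoid combined with the chain rule yields
\begin{equation*}
R' \leq I(XG_2;B)_\omega - R = I(G_2;B|X)_\omega + \lambda\, I(X;B)_\omega.
\end{equation*}
Thus $(R,R')$ lies in the region described by Corollary~\ref{Corollary:Region_lambda} for this $\lambda$, and therefore in $\mathcal{R}_\text{EA*}(\mathcal{N})$.

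The only conceptually nontrivial step is Corollary~\ref{Corollary:Region_lambda} itself, which is the lever that converts the convexity statement of Lemma~\ref{Lemma:Convexity} into the interpolation between $P_0$ and $P_1$. The key observation justifying the corollary (rather than belonging to the present lemma) is that $P_1 = (0, I(XG_2;B)_\omega)$ is itself a corner of a rectangle in $\mathcal{R}_\text{EA*}$: take the auxiliary variable $\bar{X}$ to be deterministic and set $\bar{G}_2 = (X,G_2)$, so that $I(\bar{X};B) = 0$ and $I(\bar{G}_2;B|\bar{X}) = I(XG_2;B)_\omega$. Granting this, the present lemma is just the algebraic rephrasing illustrated above, and the only mild subtlety to watch is the degenerate case $I(X;B)_\omega = 0$, which is handled separately as noted.
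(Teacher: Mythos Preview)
Your proof is correct and follows essentially the same approach as the paper: the forward inclusion via the chain rule, and the reverse inclusion by writing $R=(1-\lambda)I(X;B)_\omega$ and invoking Corollary~\ref{Corollary:Region_lambda}. Your explicit treatment of the degenerate case $I(X;B)_\omega=0$ and the side remark on why $P_1$ is a rectangle corner are minor additions that the paper either handles implicitly or defers to the proof of the corollary.
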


\begin{proof}
The inclusion $\mathcal{R}_\text{EA*}(\mathcal{N}) \subseteq \mathcal{O}_\text{EA*}(\mathcal{N})$ is immediate 
by the chain rule.
 It remains to show that 
every rate pair in the region $\mathcal{O}_\text{EA*}(\mathcal{N})$, belongs to 
$\mathcal{R}_\text{EA*}(\mathcal{N})$ as well.

Let $(R,R')\in \mathcal{O}_\text{EA*}(\mathcal{N})$, hence
 \begin{align}
 R&\leq I(X;B)_\omega 
 \label{Equation:Outer_Guranteed_Rate}
 \,,\; 
 R+R'\leq I(XG_2;B)_\omega \,.
 \end{align}
By the first inequality, there exists $0\leq\lambda\leq 1$ such that
\begin{align}
 R&=(1-\lambda)I(X;B)_\omega \,.
 \label{Equation:R_bar}
\intertext{%
 By \eqref{Equation:Outer_Guranteed_Rate}
 -\eqref{Equation:R_bar}, 
 }
 R'&\leq I(XG_2;B)_\omega-R
 \nonumber\\
 &= 
 I(XG_2;B)_\omega-I(X;B)_\omega+\lambda I(X;B)_\omega
 \nonumber\\
 &= I(G_2;B|X)_\omega+\lambda I(X;B)_\omega \,.
 \end{align}
 Hence, by Corollary~\ref{Corollary:Region_lambda},
 $(R,R')\in \mathcal{R}_\text{EA*}(\mathcal{N})$. 
\end{proof}

 \subsection{Single-letterization}
 \label{Subsection:Single_Letter}
 As mentioned above, Single-letterization is highly valued in Shannon theory (see  Remark~\ref{Remark:Single_Letter} and discussion in Section~\ref{Section:Single_Letterization}).
 We establish that our characterization 
 is a single-letter formula. Specifically, the 
 auxiliary systems, $X$, $G_1$, and $G_2$, all 
 have bounded dimensions. 
 Denote the channel input dimension by 
 $d_A\equiv \mathrm{dim}(\mathcal{H}_A) $.
 
\begin{lemma}
\label{lemm:pureCea}
The union in (\ref{eq:calRClea}) is exhausted by pure states $\ket{\phi_{G_1 G_2}}$, cardinality  %
$\abs{\mathcal{X}}\leq d_A^{\,2}+1$,
and dimensions
$\mathrm{dim}(\mathcal{H}_{G_1})=\mathrm{dim}(\mathcal{H}_{G_2})\leq d_A( d_A^{\,2}+1)$. 
\end{lemma}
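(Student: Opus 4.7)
My plan is to establish the three claims in sequence: reduction to pure states, the cardinality bound on $|\mathcal{X}|$, and the dimension bound on $G_1, G_2$.

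For the pure-state reduction, I would extend $\varphi_{G_1 G_2}$ to a purification $\ket{\phi}_{G_1 G_2 E}$ and absorb the purifying system $E$ into Bob's side, setting $\tilde{G}_2 := G_2 \otimes E$. The guaranteed-rate bound $R \leq I(X;B)_\omega$ is unaffected since it does not involve $G_2$, while the excess-rate bound only improves, as $I(\tilde{G}_2; B | X)_\phi \geq I(G_2; B | X)_\omega$ by monotonicity of the conditional mutual information under the partial trace over $E$. Thus pure states exhaust the union in \eqref{eq:calRClea}. Writing $\ket{\phi}_{G_1 G_2}$ in Schmidt form and restricting to the supports of $\phi_{G_1}$ and $\phi_{G_2}$ then yields $\dim \mathcal{H}_{G_1} = \dim \mathcal{H}_{G_2}$.

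For the cardinality bound $|\mathcal{X}| \leq d_A^{\,2} + 1$, I would apply the Fenchel--Eggleston--Carath\'eodory theorem. For each $x$, define the vector $f(x) := \bigl(\omega_A^{(x)}, \, H(\omega_B^{(x)}), \, I(G_2; B)_{\omega^{(x)}}\bigr)$; its $p_X$-weighted average yields the density matrix $\omega_A$ on $\mathcal{H}_A$ (carrying $d_A^{\,2} - 1$ real parameters) together with $H(B | X)_\omega$ and $I(G_2; B | X)_\omega$ (one parameter each), for $d_A^{\,2} + 1$ scalar constraints in total. The set of $f(x)$ over admissible encoders is a continuous image of a connected space of CPTP maps, so the connected-set refinement of Carath\'eodory yields an equivalent distribution supported on at most $d_A^{\,2} + 1$ values of $x$. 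Preserving $\omega_A$ automatically preserves $\omega_B = \mathcal{N}(\omega_A)$ and hence $I(X;B) = H(\omega_B) - H(B | X)$, so both information quantities are retained.

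For the dimension bound $\dim \mathcal{H}_{G_1} = \dim \mathcal{H}_{G_2} \leq d_A(d_A^{\,2}+1)$, I would exploit the entanglement-breaking structure of $\mathcal{N}$ along the lines of \cite[Sec.\ III-D]{Pereg:22p}. For each $x$ in the now-bounded $\mathcal{X}$, any purifying reference of $\omega_A^{(x)}$ has dimension at most $d_A$, so the quantum content on Bob's side per message can be captured by a $d_A$-dimensional space. I would then pack these $|\mathcal{X}|$ purifications coherently into a composite register $G_2 = X_2 \otimes R$ with $\dim \mathcal{H}_{X_2} = |\mathcal{X}|$ and $\dim \mathcal{H}_R = d_A$, yielding $\dim \mathcal{H}_{G_2} \leq d_A(d_A^{\,2}+1)$. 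A matching pure state $\ket{\phi}_{G_1 G_2}$ and encoders $\mathcal{F}^{(x)}$ acting coherently on the purification register (conditioned on the classical label) are then constructed so that the two information quantities are at least preserved. The main obstacle is precisely this last step: any pure $\ket{\phi}_{G_1 G_2}$ forces $\omega_{G_2 | X = x}$ to be independent of $x$, so the naive direct-sum ``block-per-$x$'' construction is inadmissible, and the block structure must instead be realized coherently through the encoders rather than through classical projections. Verifying that the mutual informations are not decreased under this adjustment is where the analysis from \cite[Sec.\ III-D]{Pereg:22p} is invoked.
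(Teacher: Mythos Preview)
Your pure-state reduction and the Carath\'eodory argument for $|\mathcal{X}|\leq d_A^{\,2}+1$ are correct and coincide with the standard route the paper inherits from \cite{PeregDeppeBoche:23p1}.

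The dimension bound, however, has a real gap. First, a conceptual misstep: you propose to ``exploit the entanglement-breaking structure of $\mathcal{N}$ along the lines of \cite[Sec.~III-D]{Pereg:22p}'', but Lemma~\ref{lemm:pureCea} is a statement about the region $\mathcal{R}_{\text{EA*}}(\mathcal{N})$ that holds for an \emph{arbitrary} channel; the entanglement-breaking hypothesis enters only later, in the converse of Theorem~\ref{theo:ClEA}. The measure-and-prepare decomposition of \cite[Sec.~III-D]{Pereg:22p} gives you nothing here. Second, while you correctly isolate the obstacle --- any pure $\ket{\phi_{G_1 G_2}}$ forces the marginal on $G_2$ to be $x$-independent, so a naive per-$x$ block construction on $G_2$ is inadmissible --- you do not actually resolve it. Saying that ``the block structure must be realized coherently through the encoders'' and deferring the verification to \cite{Pereg:22p} is not a construction.

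The paper's resolution uses a different tool you do not mention: the mirror (ricochet) identity $(\identity\otimes U)\ket{\Phi}=(U^{T}\otimes\identity)\ket{\Phi}$. Writing the Stinespring dilation $\mathcal{F}^{(x)}_{G_1\to A}(\rho)=\trace_E[U^{(x)}\rho\,U^{(x)\dagger}]$ and expanding $\ket{\psi_{G_2 G_1}}$ in the generalized Bell basis, the purified encoded state $\ket{\omega^{(x)}_{G_2 E A}}$ is rewritten so that the $x$-dependence sits entirely in a unitary $F^{(x)}=(U^{(x)})^{T}$ acting \emph{toward} the $(G_2,E)$ side, with the $A$-register anchored by a fixed maximally entangled state. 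In this reflected parametrization, $E$ is an output of an $x$-dependent map just like $G_2$, so setting $\bar G_2=(G_2,E)$ is consistent with the union in \eqref{eq:calRClea}; data processing gives $I(\bar G_2;B|X)_\omega\geq I(G_2;B|X)_\omega$, and now $\ket{\omega^{(x)}_{\bar G_2 A}}$ is pure with Schmidt rank at most $d_A$, whence $\dim\mathcal{H}_{G_2}\leq |\mathcal{X}|\,d_A\leq d_A(d_A^{\,2}+1)$. The transpose trick is the missing ingredient in your outline.
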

The first part 
has already been stated in \cite{PeregDeppeBoche:23p1}. The quantum dimension bound is new, see proof  in Section~\ref{Section:Single_Letter_Proof} below.

\section{Analysis}

\subsection{Single-Letterization}
\label{Section:Single_Letter_Proof}
The first part of Lemma~\ref{lemm:pureCea} has already been established in our previous work 
\cite[Lemma 4]{PeregDeppeBoche:23p1},
using convex analysis.
Bounding the quantum dimensions 
is more challenging. 

Consider a pure state,
$\ket{\psi_{G_1 G_2}}$.
%
Since the Schmidt rank 
is bounded by each dimension, we may assume w.l.o.g. that $G_1$ and $G_2$ are qudits of the same  dimension
$
d_0$, for some  $d_0>0$.
We would like to show that the union can be restricted such that encoded state 
$\omega_{G_2 A}^x\equiv (\mathrm{id}\otimes\mathcal{F}_{G_1 A}^{(x)})(\ketbra{\psi_{G_2 G_1}})$ remains pure.

Since every quantum channel has a Stinespring dilation, there exists
a unitary $V^{(x)}$ such that
$
\mathcal{F}_{G_1\to A}^{(x)}(\rho)=
\trace_{D E}\left[ V^{(x)} (\ketbra{0}_D\otimes\rho) V^{(x)\dagger} \right]
$, 
where  $V^{(x)}$  maps from $\mathcal{H}_{D}\otimes \mathcal{H}_{G_1} $ to $\mathcal{H}_{E}\otimes \mathcal{H}_{A} $, while $D, E$ are reference systems with appropriate dimensions.
Since $G_1$ is an arbitrary ancilla, we may include the reference $D$ within this ancilla, and simplify 
as
$
\mathcal{F}_{G_1\to A}^{(x)}(\rho)=
\trace_{ E}\left[ U^{(x)} \rho U^{(x)\dagger} \right]
$, 
where $U^{(x)}$ is a unitary   from $\mathcal{H}_{G_1} $ to $\mathcal{H}_{E}\otimes \mathcal{H}_{A} $. 

We would like the ancilla $G_2$  to absorb the reference  $E$ as well.  Seemingly,  this would contradict \eqref{eq:calRClea} as $E$ could be correlated with $x$.
To resolve this difficulty, we 
show that the encoding operation can be reflected to $G_2$. 
Fix $x\in\mathcal{X}$ and  consider the purification
$
\ket{\omega^{(x)}_{G_2 E A}} \equiv
(\identity\otimes U^{(x)})\ket{\psi_{G_2 G_1}} 
$. 

Let $W_{i,j}$ denote the Weyl operators on 
$\mathcal{H}_{G_1}\cong \mathcal{H}_{G_2}$,
for $i,j\in\{0,\ldots,d_0-1\}$ 
\cite[Sec. 3.7.2]{Wilde:17b}.
By plugging a decomposition of 
$\ket{\psi_{G_2 G_1}}$ in the generalized Bell basis \cite[Ex. 3.7.11]{Wilde:17b},
and applying
 the mirror lemma, by which  $(\identity\otimes U)\ket{\Phi}=(U^T\otimes\identity)\ket{\Phi}$ for every qudit operator $U$
\cite[Ex. 3.7.12]{Wilde:17b},
we obtain
$
\ket{\omega^{(x)}_{G_2 E A}} 
= \sum_{i,j=0}^{d_0-1} \alpha_{i,j}
\left(W_{i,j} F_{G_1\to G_2 E}^{(x)} \otimes \identity_A\right)\ket{\Phi}_{G_1 A} 
$, 
with
$F_{G_1\to G_2 E}^{(x)}=(U^{(x)})^T$.
We see that 
\eqref{eq:calRClea} 
can thus be represented as a union over all unitaries 
$F^{(x)}_{G_1\to G_2 E}\otimes \identity_A$.

In this formulation, both $E$ and $G_2$ are encoded by an operation depending on $x$.
Thus, we can extend the union to 
$\bar{G}_2=(G_2,E)$.
The bound on the guaranteed rate $R$ remains. 
As for the excess rate, 
$I(\bar{G}_2;B|X)_\omega\geq I(G_2;B|X)_\omega$.
Hence, 
it suffices to consider pure states $\ket{\omega^{(x)}_{G_2 A}}$, the Schmidt rank of which is bounded by $d_A$.
Thus, the region is exhausted with $d_0\leq |\mathcal{X}|d_A$.
%
%
%
%
%
%
%
%
%
%
%
\qed

\subsection{Capacity Proof}
\label{app:capacity}
The direct part was proved in our earlier work \cite{PeregDeppeBoche:23p1}. 
We now focus our attention on the converse.
Suppose that Alice and Bob share an unreliable 
resource $\Psi_{T_A T_B}$. Alice first prepares classical correlation,
\begin{align}
\pi_{K M K'M'} \equiv 
\left(\frac{1}{2^{nR}}\sum_{m=1}^{2^{nR}} \ketbra{ m } \otimes \ketbra{ m }\right) 
\otimes
\left(\frac{1}{2^{nR'}}\sum_{m'=1}^{2^{nR'}} \ketbra{ m' } \otimes \ketbra{ m' }\right) %
\end{align}
locally. 
She encodes by  $\mathcal{F}_{M M'T_A \rightarrow A^n}$, and transmits $A^n$.
 Bob receives $B^n$ in the state
$
\omega_{K K'  T_B B^n}\equiv (\text{id}\otimes\mathcal{N}^{\otimes n}\mathcal{F}) (\pi\otimes \Psi) 
$. 
He decodes with either $\mathcal{D}_{B^n T_B\rightarrow \hat{M} \hat{M}'}$ or $\mathcal{D}^*_{B^n \rightarrow \tilde{M} }$, depending on the availability of entanglement assistance.

Consider a sequence of codes $(\mathcal{F}_n,\Psi_n,\mathcal{D}_n,\mathcal{D}_n^*)$ 
with 
vanishing errors.
By continuity and data processing arguments \cite[App. C]{PeregDeppeBoche:23p1},
%
\begin{align}
nR
&\leq I(K;B^n)_{\omega}+n\varepsilon_n^* \,,
\label{eq:ConvIneq1noEA}
\\
n(R+R')&\leq I(K K'T_B; B^n)_{\omega}+n\varepsilon_n 
\label{eq:ConvIneq1noEAp}
\end{align}
where $\varepsilon_n,\varepsilon_n^*\to 0$ as $n\to \infty$.

Since the channel is entanglement-breaking, it can be represented by a  measurement channel $\mathcal{M}_{A\to Y}$, 
followed by a preparation channel $\mathcal{P}_{Y\to B}$, where $Y$ is classical \cite[Sec. III-D]{Pereg:22p}.
%
Define the sequence of classical variables,
$
X_i\equiv (K,Y^{i-1})
$, 
for $i\in [n]$.
By the chain rule and the data processing inequality, \eqref{eq:ConvIneq1noEA}-\eqref{eq:ConvIneq1noEAp} imply
\begin{align}
n(R-\varepsilon_n^*)
&\leq
\sum_{i=1}^n I(K B^{i-1}; B_i)_\omega
\nonumber\\
&\leq
\sum_{i=1}^n I(K Y^{i-1}; B_i)_\omega
\nonumber\\
&=
\sum_{i=1}^n I(X_i; B_i)_\omega \,,
\intertext{and similarly,}
n(R+R'-\varepsilon_n)
&\leq
\sum_{i=1}^n I(K' T_B X_i; B_i)_\omega \,.
\end{align}
Letting $J$ be uniformly distributed index in $[n]$, we have
$
R-\varepsilon_n^*
\leq
I(X_J; B_J|J)_\omega
\leq I(J X_J; B_J)_\omega
$ and %
$
R+R'-\varepsilon_n
\leq
I(K' T_B J X_J;B_J)_\omega
$ 
with respect to 
$
\omega_{J K'T_B X_J B_J}\equiv 
\frac{1}{n}\sum_{i=1}^n \ketbra{i}_J\otimes 
\omega_{K' T_B X_i B_i}
$. 

Taking $G_2\equiv (K',T_B)$, $X\equiv (J,X_J)$, $A\equiv A_J$,  hence
$B\equiv B_J$, we deduce that 
$(R,R')\in\mathcal{O}_{EA*}(\mathcal{N})$.
This, in turn, implies 
 $(R,R')\in\mathcal{R}_{EA*}(\mathcal{N})$, by Lemma~\ref{Lemma:Equivalence}. 
%
%
%
%
%
%
%
%
%
%
%
%
%
\qed %

\section{Example}
Consider the qubit depolarizing channel,
$
\mathcal{N}(\rho)=(1-\varepsilon)\rho+\varepsilon \frac{\identity}{2} 
$, 
with $\varepsilon\in [0,1]$. The unassisted capacity,  $C(\mathcal{N})%
$,  is achieved with a symmetric distribution over  $\{ |0\rangle, |1\rangle \}$ 
(see \cite{King:03p}).
On the other hand, the  capacity with reliable entanglement assistance   $C_{\text{EA}}(\mathcal{N})
$ is achieved with an EPR state \cite{BennettShorSmolin:99p}.
A classical mixture of those strategies yields the time division region,
$
\mathcal{C}_\text{EA*}(\mathcal{N})
\supseteq
\bigcup_{0\leq \lambda\leq 1}
\left\{ \begin{array}{rl}
  (R,R') \,:\;
	R \leq& (1-\lambda)\, C(\mathcal{N})   \\
  R'   \leq& \lambda C_{\text{EA}}(\mathcal{N})
	\end{array}
\right\} 
$. 
We claim that this is suboptimal.

Figure~\ref{fig:clEAur} depicts the capacity region for a parameter such that the channel is entanglement breaking, 
$\varepsilon=0.7$ (as opposed to \cite[Example 1]{PeregDeppeBoche:23p1}).
The time-division bound is below the red line, 
whereas  the  blue curve indicates the capacity region that is achieved using a superposition state. 
\begin{figure}[tb]
\center
\includegraphics[scale=0.515,trim={4.5cm 8.72cm 1cm 8.925cm},clip]
{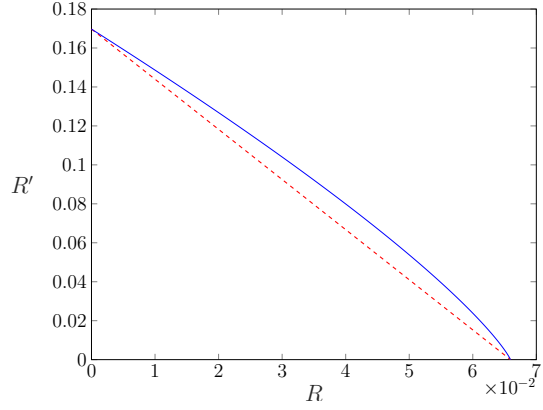} 
\caption{Achievable rate regions. 
}
\label{fig:clEAur}
\end{figure}
Based on Theorem~\ref{theo:ClEA}, we establish that the capacity region of an entanglement-breaking qubit depolarizing channel with unreliable entanglement assistance is given by 
\begin{align}
\mathcal{C}_\text{EA*}(\mathcal{N})
=
\bigcup_{0\leq \alpha\leq \frac{1}{2}}
\left\{ \begin{array}{rl}
  (R,R') \,:\;
	R &\leq 1-h_2\left(\alpha*\frac{\varepsilon}{2} \right)   \\
  R'   &\leq h_2(\alpha)+h_2\left(\alpha*\frac{\varepsilon}{2} \right)-H\bigg(\frac{\alpha \varepsilon}{2},\frac{(1-\alpha)\varepsilon}{2},
  \\
  &\frac{1}{2}-\frac{\varepsilon}{4}-\sqrt{
  \frac{\varepsilon^2}{16}-(1-\alpha)\alpha \varepsilon(1-\frac{3\varepsilon}{4})+\frac{1-\varepsilon}{4}
  },
  \\
  &\frac{1}{2}-\frac{\varepsilon}{4}+\sqrt{
  \frac{\varepsilon^2}{16}-(1-\alpha)\alpha \varepsilon(1-\frac{3\varepsilon}{4})+\frac{1-\varepsilon}{4}
  } \bigg)
	\end{array}
\right\} 
\label{Equation:Depolarizing_Region}
\end{align}
where
$H(\mathbf{p})\equiv -\sum_i p_i\log(p_i)$ is the Shannon entropy for a classical probability vector $\mathbf{p}$, the binary entropy function is denoted by
$h_2(x)\equiv H(x,1-x)$ for $x\in [0,1]$, and $\alpha*\beta=(1-\alpha)\beta+\alpha(1-\beta)$ is the binary convolution operation.

\begin{proof}
By Theorem~\ref{theo:ClEA},
it suffices to evaluate the region $\mathcal{R}_\text{EA*}(\mathcal{N})$, as defined in \eqref{eq:calRClea}.

We begin with the converse part and show that the 
set on the right-hand side of 
\eqref{Equation:Depolarizing_Region} is an outer bound on $\mathcal{R}_\text{EA*}(\mathcal{N})$.
Consider a rate pair $(R,R')\in \mathcal{R}_\text{EA*}(\mathcal{N})$. Hence, 
$R\leq I(X;B)_\omega$ and $R'\leq I(G_2;B|X)_\omega$,
or, equivalently,
\begin{subequations}
\label{Equation:Depolarizing_Converse_1}
\begin{align}
R&\leq H(B)_\omega-H(B|X)_\omega \,,
\\
R'&\leq H(G_2|X)_\omega+ H(B|X)_\omega-H(G_2 B|X)_\omega \,,
\end{align}
\end{subequations}
for some pure input state $\ket{\phi_{G_1 G_2}}$,
variable $X\sim p_X$, and
encoder 
$\mathcal{F}_{G_1\to A}^{(x)}$ (see Lemma~\ref{lemm:pureCea}).

Based on the analysis in Section~\ref{Section:Single_Letter_Proof},
it suffices to consider
an  encoder
that produces a pure state $\ket{\omega_{G_2 A}^{(x)}}$,
for $x\in\mathcal{X}$.
%
Consider a Schmidt decomposition,
\begin{align*}
\ket{\omega_{G_2 A}^{(x)}}=
\sqrt{1-\alpha_x}\ket{\theta_{0x}}\otimes \ket{\psi_{0x}}
+\sqrt{\alpha_x}\ket{\theta_{1x}}\otimes \ket{\psi_{1x}}
\end{align*}
with $\alpha_x\in [0,1]$.
Since the encoding channel is applied to $G_1$ alone, the reduced state of $G_2$ remains unchanged. 
Thereby, the eigenvalues $(1-\alpha_x,\alpha_x)$ must be independent of $x$. 
That is, $\alpha_x\equiv \alpha$ for $x\in\mathcal{X}$, hence
\begin{align}
H(G_2|X)_\omega=h_2(\alpha) \,.
\label{Equation:Depolarizing_Input_Entropy}
\end{align}

Furthermore, the depolarizing channel is unitarily covariant, i.e., $\mathcal{N}(U\rho U^\dagger)=U \mathcal{N}(\rho )U^\dagger$ for every unitary $U$ on $\mathcal{H}_A$. Thus,
\begin{align}
H(B|X)_\omega&=H(\mathcal{N}(\widetilde{\phi}_{A}))
=h_2\left(\alpha*\frac{\varepsilon}{2} \right)
\end{align}
where $\ket{\widetilde{\phi}_{G_2 A}}= (1-\alpha)\ket{00}+\alpha\ket{11}$, and similarly,
\begin{align}
&H(G_2 B|X)_\omega=H\left((\mathrm{id}\otimes\mathcal{N})(\widetilde{\phi}_{G_2 A})\right)
\nonumber\\
&=H\left(\frac{\alpha \varepsilon}{2},\frac{(1-\alpha)\varepsilon}{2},
  \frac{1}{2}-\frac{\varepsilon}{4}\pm\sqrt{
  \frac{\varepsilon^2}{16}-(1-\alpha)\alpha \varepsilon\left(1-\frac{3\varepsilon}{4}\right)+\frac{1-\varepsilon}{4}
  } \right)
\label{Equation:Depolarizing_Converse_Last}
\end{align}
(see \cite{LeungWatrous:17p}).
As the output entropy is bounded by 
$H(B)_\omega\leq 1$,
the converse follows from 
\eqref{Equation:Depolarizing_Converse_1}-\eqref{Equation:Depolarizing_Converse_Last}.

Achievability follows as in \cite[Example 1]{PeregDeppeBoche:23p1}.
 Instead of a classical mixture, we now use quantum superposition. 
Set
$\ket{\phi_{G_1 G_2}}\equiv 
\sqrt{1-\alpha}\ket{00}+\sqrt{\alpha}\ket{11}
$, 
$p_X = \left( \frac{1}{2},\frac{1}{2} \right)$, 
$\mathcal{F}^{(x)}(\rho)\equiv \mathsf{X}^x \rho \mathsf{X}^x $, 
where $\mathsf{X}$ is the bitflip Pauli operator.
Thus,  $\alpha=0$ and $\alpha=\frac{1}{2}$ 
achieve the unassisted capacity and entanglement-assisted capacity, respectively. The resulting region is the set on the right-hand side of \eqref{Equation:Depolarizing_Region}.
\end{proof}

%

\section{Summary and Discussion}
We address communication over an entanglement-breaking  quantum channel, given \emph{unreliable} entanglement assistance. Previous work established a multi-letter asymptotic formula and presented the quantum ``superposition coding" (SPC) achievable region \cite{PeregDeppeBoche:23p1}.
Here, we show that the region is optimal for entanglement-breaking channels, and we single-letterize the formula, providing a complete characterization of the capacity region.
Furthermore, we derive a closed-form expression for the  qubit depolarizing channel, with a parameter $\varepsilon\geq \frac{2}{3}$. It is further demonstrated that the capacity region is strictly larger than the time-division rate region. From a networking perspective, this finding is nontrivial and highlights a quantum behavior arising from superposition.

We conclude with a discussion on the application in a dynamic communication network, the importance of single-letterization, the role of  entanglement breaking channels, and
the challenges posed by unreliable entanglement resources --- the underlying motivations, the concept of "hard decision decoding," the links to classical models, surprising behavior, and expected impact. 

\subsection{Dynamic Communication and Entanglement Resources}
In a dynamic communication network, information is not necessarily transmitted between  two particular nodes at every point in time. In principle, in the ``quiet" period of time, entanglement can be generated between those nodes. 
While entanglement can be harnessed to generate shared randomness, its potential utility extends far beyond that \cite{ChitambarGour:19p,BBDFFJS:21b}.
This motivates using entanglement to enhance various communication networks and applications, such as the Internet of Things (IoT) \cite{NotzelDiAdamo:20c1,NoetzelDiAdamo:20c2,NoetzelDiAdamo:20c,Noetzel:20p}.

Superdense coding \cite{BennetWiesner:92p} is a fundamental communication protocol, where a pair of classical bits is transmitted using just one instance of a noiseless qubit channel and a maximally entangled pair. This means that entanglement assistance effectively doubles the rate at which classical messages can be sent over a noiseless qubit channel.

\subsection{Single Letterization}
\label{Section:Single_Letterization}
In Shannon theory, the efficiency of communication across noisy channels is described by the concept of channel capacity. The capacity is defined as the maximum transmission rate that permits an error probability that tends to zero in the limit of an infinite blocklength.
Remarkably, Shannon \cite{Shannon:48p} proved that the capacity of a classical channel $W_{Y|X}$ admits a single-letter formula, i.e., a non-asysmtotic expression.
The significance of such single letterization is attributed to the following:
\begin{enumerate}
\item
\emph{Computability:}
Shannon's capacity formula is
generally considered to be ``easy to compute" in the sense that
given the channel statistics,  there are efficient algorithms, such as the Blahut-Arimoto algorithm \cite{Arimoto:72p,Blahut:72p},  
that can solve this convex optimization problem numerically,  up to a given precision and provided that the input and output
dimensions are not too
large. 
On the other hand, a multi-letter formula, of the form 
\begin{align}
\lim_{n\to \infty} \frac{1}{n} f(W^{\otimes n}) \,,
\end{align}
is difficult to compute since the dimensions of $W^{\otimes n}$ grow exponentially with $n$.

\item
\emph{Uniqueness:}
A multi-letter formula
does not
uniquely characterize the capacity of a channel for a given 
task \cite{Wilde:17b}. 
For instance, the capacity of a classical channel can be expressed as  \cite[Sec. 13.1.3]{Wilde:17b}
\begin{align}
\lim_{n\to \infty} \frac{1}{n} f_c(W^{\otimes n})=f_1(W)
\label{Equation:Not_Unique_Regularization}
\intertext{where}
f_c(W)=H(X)-cH(X|Y)
\end{align}
for every constant $c\geq 1$.
The multi-letter formulas 
 $\lim \frac{1}{n} f_1(W^{\otimes n})$ and 
 $\lim \frac{1}{n} f_5(W^{\otimes n})$ have a different form, and yet, both describe the channel capacity.
Hence, such a multi-letter description  is not unique.

\item
\emph{Optimal coding:}
Single-letter formulas provide valuable insights into optimal coding strategies in various settings.
For instance, the characterization for the multiple access channel captures coding techniques such as time sharing and successive-cancellation decoding \cite{Ahlswede:74p,WieseBoche:13p,MDCSPP:22p}. 
For parallel Gaussian channels, the capacity formula leads to the water filling power allocation 
\cite{Shannon:49p,BiglieriProadisShamai:98p,MaZhuZhang:23p}, 
among other applications.

\end{enumerate}

Unfortunately, a single-letter characterization for the capacity of a quantum channel is an open problem \cite{Holevo:12b,Pereg:22p}. 
Nevertheless, it is important to note that multi-letter characterizations remain significant  \cite[Remark 7]{PeregDeppeBoche:21p}. In many examples,  the capacity can be evaluated exactly based on the multi-letter result \cite{Holevo:12b,Wilde:17b}. 
Furthermore, there are interesting phenomena that can be observed even when a single-letter expression for the capacity is not available \cite{SmithYard:08p,PeregDeppeBoche:23a}.

In the entanglement-assisted communication setting, where
pre-shared entanglement resources are available to the transmitter and the receiver, 
 a complete single-letter characterization is well established, and can be viewed as the quantum parallel of Shannon's capacity theorem \cite{BennettShorSmolin:02p} \cite[Remark 5]{PeregDeppeBoche:21p}.
 The characterization with entanglement assistance provides a computable upper bound for
unassisted communication as well. 
 Therefore, entanglement-assisted communication has favorable attributes from both performance and analysis perspectives.

\subsection{Entanglement Breaking Channels}
Entanglement breaking is a fundamental property of a large class of quantum channels (see Section~\ref{Subsec:Notation}), including measurement channels and classical-quantum (c-q) channels \cite{HorodeckShorRuskai:03p}. The qubit depolarizing channel is  entanglement breaking if and only if the depolarization parameter is greater than or equal to $2/3$  \cite{MoravvcikovaZiman:10p}.

While an entanglement-breaking channel cannot be used to generate entanglement, it may facilitate the transmission of classical messages, and entanglement assistance can increase the channel  capacity for sending classical information substantially \cite{HaoShiLiShapiroZhuangZhang:21p}.
 Furthermore, the capacity without assistance is solved as well.
 Shor \cite{Shor:02p} established the single-letter characterization of the unassisted capacity by first showing that the Holevo information of an entanglement-breaking channel is additive.

 Here, we used a more direct approach, which was recently pointed out by
the author \cite{Pereg:22p} 
 (see also \cite{WangDasWilde:17p}),  proving a single-letter converse proof ``from scratch".
The proof is based on the representation of an entanglement breaking channel as a serial concatenation of a measurement channel and a c-q channel, along with the data-processing inequality (see capacity proof in Section~\ref{app:capacity}).

\subsection{Unreliable Entanglement Resources}
Communication with unreliable entanglement assistance  was proposed by
 the author, along with Deppe and Boche \cite{PeregDeppeBoche:23p1}.
 The model accounts for the practical challenges and low efficiency of entanglement generation in current implementations and experiments \cite{JaegerSergienko:14p}.
 The framework is inspired by 
classical approaches for unreliable cooperation resources in the classical literature
 \cite{Steinberg:14c,HuleihelSteinberg:17p,ItzhakSteinberg:17c,ItzhakSteinberg:21p,PeregSteinberg:21c4,OzarowShamaiWyner:94p,KarasikSimeoneShamai:13p}.
The focus in our quantum setting, however,  is on a point-to-point quantum channel and the reliability of correlation resources.

 Our principle of operation provides reliability by design, by  adapting the communication rate based on the availability of entanglement assistance, while eliminating the need for feedback, repetition, or distillation. As illustrated in the Introduction, and as opposed to other models in the literature \cite{ZhuangZhuShor:17p}, the availability of entanglement is not associated with a probabilistic model either. 
 Here, the receiver is aware of the availability or absence of entanglement resources through heralded entanglement generation.
 The receiver performs  ``hard decision decoding"  \cite{Proakis:01b}, deciding whether the entanglement resources are usable or not at all.
 
Drawing a parallel with the classical cooperation model
\cite{HuleihelSteinberg:17p},
the unreliable assistance model is based on the 
engineering aspects and the architecture of modern communication networks.
%
We anticipate that future quantum communication networks will adhere to similar reliability principles.
In particular, we envision that in a large quantum communication
network, the availability of entanglement resources will not be guaranteed in advance. 
Specifically, the accessibility of entanglement resources will depend on factors such as  weather conditions,  the operational  status of quantum repeaters, or  the willingness of peers 
to provide assistance. In such a network, the transmitter and the receiver are aware of the \emph{possibility} that entanglement assistance will be available,  yet its confirmation remains uncertain until reception.

The model  exhibits unexpected behavior, highlighting that    communication with unreliable entanglement assistance is not a mere combination of the entanglement-assisted and unassisted settings.
We have shown that for
an entanglement-breaking depolarizing channel,  quantum SPC outperforms time division even in this simple point-to-point setting. This is surprising because SPC is typically useful in more complex network setups, and does not yield an advantage in point-to-point communication.
For example,   in a classical broadcast channel with degraded messages, where a transmitter $X$ communicates with two receivers, $Y_1$ and $Y_2$, SPC is unnecessary when the receivers' outputs are identical, i.e., $Y_1=Y_2$,  as  
the capacity region can be attained using a simpler approach of time division. 
That is, concatenating two single-user codes is optimal.
In our context, the system can be regarded as a quantum broadcast channel with degraded messages where one receiver has entanglement assistance, and the other does not.  Nevertheless,  the output states of the receivers are identical (without violating the no-cloning theorem, as we consider two alternative scenarios).
The expectation would be that time division, combining assisted and unassisted codes, achieves optimality. However, this expectation is proven false as quantum SPC can outperform time division, based on the combination of a superposition code with a superposition state.

We expect that the present work will have a significant impact due to its relevance to practical systems, the interesting and unexpected properties, and  the potential applicability of our reliability principles across a wide variety of tasks and protocols that rely on pre-established entanglement. These range between research areas such as communication, distributed computing,  complexity theory, and cryptography,  among others.


\section*{Acknowledgments}
The author thanks Christoph Becher (Universit\"at des Saarlandes) for his observations on the practical implementation of heralded entanglement, as well as Quntao Zhuang (University of Southern California) and Saikat Guha (University of Arizona)
for useful discussions on entanglement-breaking channels.

This work was supported by the Israel VATAT Junior Faculty Program for Quantum Science and Technology through Grant 86636903, Israel Science Foundation (ISF), Grants 939/23 and 2691/23,  Chaya Career Advancement Chair, Grant 8776026,  German-Israeli Project Cooperation (DIP), Grant
2032991, and  Nevet Program of the Helen Diller Quantum Center at the Technion.


\appendix
\section{Convexity Properties}
\label{app:Convexity}
In Section~\ref{Subsection:Equivalence}, 
we presented convexity properties of the region $\mathcal{R}_{\text{EA}*}(\mathcal{N})$, as defined in \eqref{eq:calRClea}. 
Since the derivation is technical, we have delegated the proof to the appendix. 

\subsection{Proof of Lemma~\ref{Lemma:Convexity}}
Let $\lambda\in [0,1]$.
Consider two rate pairs, $(R_u,R_u')$, 
$u\in\{1,2\}$, that belong to the rate region $\mathcal{R}_\text{EA*}(\mathcal{N})$. Then, we have 
$R_u\leq I(X;B|U=u)_\omega$ and $R_u'\leq I(G_2^{(u)};B|X,U=u)_\omega$ for some conditional distribution $p_{X|U}$, entangled state $\varphi_{G_1^{(u)} G_2^{(u)}}$, and encoding channel $\mathcal{F}^{(x,u)}_{G_1^{(u)}\to A}$.

Consider the joint state
\begin{align}
\varphi_{\bar{A}_0 \bar{A}_1}=
 \varphi_{G_1^{(1)} G_2^{(1)}}
\otimes \varphi_{G_1^{(2)} G_2^{(2)}} \,.
\end{align}
Given $u\in\{1,2\}$, define an encoding channel 
$\bar{\mathcal{F}}^{(x,u)}_{\bar{A}_0 \to A}$
that maps from $\bar{A}_0$ to $A$,
 such that
\begin{align}
\bar{\mathcal{F}}_{\bar{A}_0\to A}^{(x,1)}&\equiv \mathcal{F}^{(x,1)}_{G_1^{(1)}\to A} \circ \trace_{G_1^{(2)}} \,,
\\
\bar{\mathcal{F}}_{\bar{A}_0\to A}^{(x,1)}&\equiv \mathcal{F}^{(x,1)}_{G_1^{(1)}\to A} \circ \trace_{G_1^{(2)}} \,.
\end{align}
The system $A$ is then sent through the channel
$\mathcal{N}_{A\to B}$.
We note that if $U=1$, then the output is uncorrelated with $G_2^{(2)}$.
Similarly, for $U=2$, there is no correlation with $G_2^{(1)}$.
Therefore,
\begin{align}
I(\bar{A}_1;B|X,U=u)_\omega=I(G_2^{(u)};B|X,U=u)_\omega
\end{align}
for $u\in\{1,2\}$.

Let $U\sim\mathrm{Bernoulli}(\lambda)$, with $\lambda\in [0,1]$.
  Observe that  the convex combinations of the rates satisfy
\begin{align}
R_\lambda&\equiv (1-\lambda)R_1+\lambda R_2 \leq
I(X;B|U)_\omega\leq I(XU;B)_\omega
\end{align}
and
\begin{align}
R_\lambda'&\equiv (1-\lambda)R_1'+\lambda R_2' \leq
I(\bar{A}_1;B|XU)_\omega 
\end{align}
with respect to the following states,
\begin{align}
\omega_{XU \bar{A}_1  A}
&=
\sum_{(x,u)\in\mathcal{X}\times\mathcal{U}} p_U(u) p_{X|U}(x|u) \ketbra{x,u}
\nonumber\\&
\otimes ( \mathrm{id}\otimes \bar{\mathcal{F}}^{(x,u)}_{\bar{A}_0\to A})
(\varphi_{ \bar{A}_1  \bar{A}_0}) \,,  \\
\omega_{XU \bar{A}_1  B}
&=
(\mathrm{id}\otimes\mathcal{N}_{A\rightarrow B})(\omega_{XU\bar{A}_1  A})
\,.
\end{align}
As we substitute
 the auxiliary variable $\bar{X}\equiv (X,U)$ in \eqref{eq:calRClea}, we observe that the pair $(R_\lambda,R_\lambda')$ is in the rate region $\mathcal{R}_\text{EA*}(\mathcal{N})$ as well. Thereby, the region is a convex set.
\qed

\subsection{Proof of Corollary~\ref{Corollary:Region_lambda}}
The proof for Corollary~\ref{Corollary:Region_lambda} follows from the convexity property in  Lemma~\ref{Lemma:Convexity}.
It suffices to consider the boundaries of the two regions in \eqref{eq:calRClea3}.

Consider $(R_1,R_1')= (I(X;B)_\omega,I(G_2;B|X)_\omega)$.
Next, we claim that the rate pair
$(R_2,R_2')=(0,I(XG_2;B)_\omega)$ belongs to 
$\mathcal{R}_\text{EA*}(\mathcal{N})$ as well. To see this, set
\begin{align}
&
\widetilde{X}\equiv \emptyset
\,,\;
\widetilde{A}_1\equiv (X,G_2) 
\,,\;
\widetilde{A}_0\equiv A
\,,\; \text{ and } 
\nonumber\\
&
\varphi_{\widetilde{A}_1 \widetilde{A}_0}\equiv \omega_{XG_2 A} 
\,.
\end{align}
As for the convex combination of $(R_1,R_1')$ and $(R_2,R_2')$, we have
\begin{align}
R_\lambda&\equiv (1-\lambda)R_1+\lambda R_2 =(1-\lambda)I(X;B)_\omega
\label{Equation:R_lambda2}
\intertext{and}
R_\lambda'&\equiv (1-\lambda)R_1'+\lambda R_2'
\nonumber\\
&= (1-\lambda)I(G_2;B|X)_\omega+\lambda I(XG_2;B)_\omega
\nonumber\\
&= I(G_2;B|X)_\omega+\lambda[ I(XG_2;B)_\omega-I(G_2;B|X)_\omega]
\nonumber\\
&= I(G_2;B|X)_\omega+\lambda  I(X;B)_\omega
\label{Equation:R_lambda_Prime2}
\end{align}
by the chain rule for the quantum mutual information. By Lemma~\ref{Lemma:Convexity}, the pair $(R_\lambda,R_\lambda')$ belongs to the region $\mathcal{R}_{\text{EA}*}(\mathcal{N})$, hence the corollary follows from \eqref{Equation:R_lambda2}-\eqref{Equation:R_lambda_Prime2}.
\qed


\ifdefined\bibstar\else\newcommand{\bibstar}[1]{}\fi

\end{document}